\title{Testing Triangle Freeness in the General Model in Graphs with Arboricity $O(\sqrt{n})$}
\author{
Reut Levi\thanks{ Email: {\tt reut.levi1@idc.ac.il}.}
}
\newcommand{\eps}{\epsilon}
\newcommand{\eqdef}{\stackrel{\rm def}{=}}
\newtheorem{theorem}{Theorem}
\newtheorem{lemma}{Lemma}
\newtheorem{claim}[lemma]{Claim}
\newtheorem{definition}[lemma]{Definition}
\newtheorem{remark}{Remark}
\newcommand{\poly}{{\rm poly}}
\newcommand{\E}{{\rm E}}
\renewcommand{\Pr}{\mathrm{Pr}}
\begin{document}

\maketitle

\begin{abstract}
We study the problem of testing triangle freeness in the general graph model. 
This problem was first studied in the general graph model by Alon et al. (SIAM J. Discret. Math. 2008) who provided both lower bounds and upper bounds that depend on the number of vertices and the average degree of the graph.
Their bounds are tight only when $d_{\rm max} = O(d)$ and $\bar{d} \leq \sqrt{n}$ or when $\bar{d} = \Theta(1)$, where $d_{\rm max}$ denotes the maximum degree and $\bar{d}$ denotes the average degree of the graph.
In this paper we provide bounds that depend on the arboricity of the graph and the average degree.
As in Alon et al., the parameters of our tester is the number of vertices, $n$, the number of edges, $m$, and the proximity parameter $\epsilon$ (the arboricity of the graph is not a parameter of the algorithm).
The query complexity of our tester is $\tilde{O}(\Gamma/\bar{d} + \Gamma)\cdot \poly(1/\eps)$ on expectation, where $\Gamma$ denotes the arboricity of the input graph (we use $\tilde{O}(\cdot)$ to suppress $O(\log \log n)$ factors). 
We show that for graphs with arboricity $O(\sqrt{n})$ this upper bound is tight in the following sense. 
For any $\Gamma \in [s]$ where $s= \Theta(\sqrt{n})$ there exists a family of graphs with arboricity $\Gamma$ and average degree $\bar{d}$ such that $\Omega(\Gamma/\bar{d} + \Gamma)$ queries are required for testing triangle freeness on this family of graphs.
Moreover, this lower bound holds for any such $\Gamma$ and for a large range of feasible average degrees~\footnote{For a graph, $G$, whose arboricity is $\Gamma$, the number of edges is at most $n \cdot \Gamma$ and at least $\Gamma^2$. Thus, the average degree of $G$ is at least $\Gamma^2/n$ and at most $\Gamma$.}.
\end{abstract}


\section{Introduction}
Testing triangle-freeness is one of the most basic decision problems on graphs. The existence of triangles in a graph is often a crucial property for various applications. 
In the realm of property testing, decision problems are relaxed so that a tester for a property $\mathcal{P}$ is only required to distinguish between graphs that have the property $\mathcal{P}$ from graphs which are ``far'' according to some predetermined distance measure, from having the property $\mathcal{P}$, which in our case are graphs which are far from being triangle free. 

Testing triangle freeness is known to be possible with query complexity which only depends on the proximity parameter, $\eps$, in graphs which are either dense or sparse. 
More specifically, Alon, Fischer, Krivelevich and Szegedy~\cite{AFKS00} showed that in the dense-graphs model~\cite{dense} it is possible to test triangle-freeness with query complexity which is independent of the size of the graph but has a tower-type dependence in $1/\eps$.
In the other extreme, Goldreich and Ron~\cite{GR02} observed that in the bounded-degree model~\cite{GR02} it is possible to test triangle-freeness with query complexity $O(1/\eps)$ given that the maximum degree of the input graph is constant.

Alon, Kaufman, Krivelevich, Ron~\cite{AKKR08} were the first to study this problem in the general-graphs model~\cite{ParnasR02, KKR04}.
This model is more stringent in the sense that we do not assume anything on the density of the graph and the distance is measured with respect to the actual number of edges in the graph (instead of the maximum possible number of edges). 
They provided several upper bounds which apply for almost the entire range of average degrees. They also provided lower bounds that show that their upper bounds are at most quadratic in the optimal bounds.  
Shortly after, Rast~\cite{Ras06} and Gugelmann~\cite{Gug06} improved their upper bounds and lower bounds, respectively, for some ranges of the parameters. 

Although there is a fairly significant gap between the known upper bounds and lower bounds for the vast range of parameters, there has been no progress on this question since then.
In this paper we provide an upper bound and several lower bounds which are tight for a large range of parameters.
Surprisingly, our bounds depend on the arboricity of the graph although it is not a parameter of our algorithm.

\subsection{Results}
We provide an upper bound whose running time complexity is $\tilde{O}(\Gamma/\bar{d} + \Gamma) \cdot \poly(1/\eps)$ on expectation.
Therefore, for $m \leq n$ our upper bound is $\tilde{O}(\Gamma/\bar{d})$ and when $m > n$ our upper bound is $\tilde{O}(\Gamma)$ (ignoring polynomial dependencies in $1/\eps$). 

We provide three lower bounds, each suitable for a different range of parameters. 
\begin{enumerate}
\item For any $\Gamma$ and any feasible $m \geq 1$, we provide a lower bound of $\Omega((\Gamma n)/m) = \Omega(\Gamma/\bar{d})$ queries.
Therefore our upper bound is tight when $m\leq n$ (up to polynomial dependencies in $1/\eps$ and $O(\log \log n)$ factors).

\item For any $\Gamma$ and any feasible $m \geq \Gamma^3$ we provide a lower bound of $\Omega(\Gamma)$ queries.
Therefore, our upper bound is also essentially tight as long as $m \geq \Gamma^3$ (notice that since $m \leq n\cdot \Gamma$, it is implied that this lower bound applies only for graphs in which $\Gamma = O(n^{1/3})$).
Since we may assume that $m\geq n$ (otherwise we already have essentially tight lower bound), one implication of this lower bound is that our upper bound is tight in the strong sense for graphs with arboricity $O(n^{1/3})$ (namely it is tight for any feasible $m$) as it is always the case that $m \geq \Gamma^3$ for $\Gamma = O(n^{1/3})$. 

\item For any $\Gamma \leq (n/2)^{1/2}$ and any feasible $n \leq m \leq \Gamma^3$ we provide a lower bound of $\Omega(m^{1/3})$ queries. 
Since it is always the case that $m \geq \Gamma^2$, a lower bound of $\Omega(\Gamma^{2/3})$ queries is also implied. 
\end{enumerate}

To summarize, for graphs of arboricity $\Gamma = O(n^{1/2})$ we obtain that our upper bound is tight for a large range of average degrees. 
Additionally, for the range of average degrees in which we do not provide tight bounds, our upper bound is essentially $O(\Gamma)$ while our lower bound is $\Omega(\Gamma^{2/3})$ in the worst case.  

\subsection{Related Work}
\subsubsection{Property testing of triangle freeness}
\sloppy
As mentioned above, testing triangle freeness, in the context of property testing, was first studied by Alon et al.~\cite{AFKS00} in the dense graphs model.
They showed that triangle freeness can be tested in time which is independent of the size of the graph. 
However, their upper bound has tower-type dependence in $1/\eps$.
Alon~\cite{Alo02} showed that the query complexity of this problem in the dense-graphs model is indeed super-polynomial in $1/\eps$. 

In the bounded degree model Goldreich and Ron~\cite{GR02} observed that it is possible to test triangle freeness with query complexity $O(1/\eps)$ in graphs of maximum degree bounded by some constant. 

The problem of  testing triangle freeness in the general graph model was first studied by Alon, Kaufman, Krivelevich, Ron~\cite{AKKR08}.
The query complexity of their algorithms dependent on $n$ and $\bar{d}$, the number of vertices in the graph and the average degree, respectively.  
They provided sublinear upper bounds for almost the entire range of parameters. Moreover, their upper bounds are at most quadratic in their lower bounds.
Specifically, their upper bound, which is combined from several upper bounds is $\tilde{O}(\min \{ (n\bar{d})^{1/2}/\eps^{3/2}, (n^{4/3}/\bar{d}^{2/3})/\eps^2\})$.
Their lower bound, which is also combined from several lower bounds, is $\Omega(\max\{(n/\bar{d})^{1/2}, \min\{\bar{d}, n/\bar{d}\}, \min\{\bar{d}^{1/2}, n^{2/3}/\bar{d}^{1/3}\}\cdot n^{-o(1)}\})$. 

\sloppy 
Rast~\cite{Ras06} improved the upper bound of~\cite{AKKR08} for graphs with average degree in the range $[c_1 n^{1/5}, c_2 n^{1/2}]$ where $c_1$ and $c_2$ are some constants.  
The upper bound in~\cite{Ras06} is $O(\max\{(n\bar{d})^{4/9}, n^{2/3}/\bar{d}^{1/3}\})$.

Gugelmann~\cite{Gug06} provided a lower bound which improves the lower bound in~\cite{AKKR08} for graphs with average degree in the range $[c_1 n^{2/5}, c_2 n^{4/5}]$ where $c_1$ and $c_2$ are some constants. 
The lower bound in~\cite{Gug06} is $\Omega(\min\{(n\bar{d})^{1/3}, n/\bar{d}\})$.

\subsubsection{Sublinear algorithms that receive the arboricity of the graph as a parameter}
Eden, Ron and Rosenbaum~\cite{ERR19} designed an algorithm that given $n$, the number of edges of the input graph and an upper bound on the arboricity of the input graph, $\Gamma$, 
the algorithm makes $O(\Gamma/\bar{d} + \log^3 n/\eps)$ queries on expectation and samples an edge of the graph almost uniformly. 
More specifically, each edge in the graph is sampled with probability in the range $[(1-\eps)m, (1+\eps)m]$. 
 
Eden, Ron and Seshadhri~\cite{ERS19} estimate the degree distribution moments of an undirected graph.
In particular, for estimating the average degree of a graph, their algorithm has query complexity of $\tilde{O}(\Gamma/\bar{d})$. 
As they show in their paper, if $\Gamma$ is not given as an input to the algorithm then estimating the average degree is not possible in general with this query complexity. 

In another paper, Eden, Ron and Seshadhri~\cite{ERS20} give a $(1\pm \eps)$-approximation for the number of $k$-cliques in a graph given a bound on the arboricity of the graph $\Gamma$. 
In particular for triangles they provide an upper bound with expected running time, in terms of $n$, $\Gamma$ and the number of triangles in the graph, $n_3$, of $\min\{n\Gamma^2/n_3, n/n_3^{1/3} + (m\Gamma)/n_3\}\cdot \poly(\log n, 1/\eps)$.

\subsubsection{Testing graphs for bounded arboricity}
Eden, Levi and Ron~\cite{ELR20} provided an algorithm for testing whether a graph has bounded arboricity. Specifically, they provide a tolerant tester that distinguished graphs that are $\eps$-close to having arboricity $\Gamma$ from which are $c\cdot \eps$-far from having arboricity $3 \Gamma$, where $c$ is an absolute constant.  
The query complexity and the running time of their algorithm is in terms of $n$, $m$ and $\Gamma$ is $\tilde{O}(n/\sqrt{m} + n\Gamma/m)$ and is quasi-polynomial in $1/\eps$.

\subsection{Comparison between our upper bound and upper bounds in previous work}
As mentioned before, Alon et al.~\cite{AKKR08} provide tight bounds only in two cases. 
The first case is when $d_{\rm max} = O(\bar{d})$ and $\bar{d} \leq \sqrt{n}$, where $d_{\rm max}$ denotes the maximum degree and $\bar{d}$ denotes the average degree of the graph.
In this case, it follows that $\Gamma = \Theta(d_{\rm max}) = \Theta(\bar{d})$ an so our upper bounds essentially match.
Additionally we note that a bound on the arboricty of the graph does not imply a bound on the maximum degree of the graph. 
In fact, the maximum degree could be $\Theta(n)$ while the arboricity is $\Theta(1)$ (as it is the case in the star graph).
Consequently, the tightness of our upper bound is not restricted for graphs which have bounded maximum degree. 

The second case is when $\bar{d} = \Theta(1)$, for these graphs the running time complexity of their algorithm is $\tilde{\Theta}(n^{1/2})$.
For this case, the running time complexity of our upper bound is $\tilde{O}(\Gamma)$. We note that in graphs in which $\bar{d} = \Theta(1)$, $\Gamma$ could range between $\Theta(1)$ and $\Theta(n^{1/2})$. Therefore when $\bar{d} = \Theta(1)$ the complexity of our upper bound is not worse than the complexity of the upper bound in~\cite{AKKR08} but could be much better, depending on $\Gamma$.

For average degree in the range between $\Omega(1)$ and $O(n^{2/5})$ and in the range between $\Omega(n^{1/2})$ and $O(n^{2/3})$ the upper bound of $O(m^{1/2})$ queries of Alon et al. achieves the best running time, in terms of $n$ and $m$. 
For these ranges, the running time of our algorithm is $\tilde{O}(\Gamma)$. Since $m^{1/2} \geq \Gamma$, we obtain that for this ranges as well the performances of our algorithm are at least as good (up to $O(\log \log n)$ and $\poly(1/\eps)$ factors) but could be significantly better.
 
For average degree in the range between $\Omega(n^{2/5})$ and $O(n^{1/2})$ the upper bound of $O(\max\{(n\bar{d})^{4/9}, n^{2/3}/\bar{d}^{1/3}\})$ queries of Rast~\cite{Ras06} achieves the best running time. In this range our upper bound is always better than the upper bound of~\cite{Ras06} for graphs of arboricity $O(n^{12/21})$.

\subsection{High-level of Our Algorithm}
It is well known that a graph which is $\eps$-far from being triangle free has $\Omega(\eps m)$ edge-disjoint triangles (see Claim~\ref{clm:trg}).
Therefore if we were able to sample edges uniformly from the graph then after sampling $O(1/\eps)$ edges we would sample an edge $\{u, v\}$ which belongs to a triangle.
Thus, if we revealed the entire neighborhood of $u$ and the entire neighborhood of $v$ then we would find a triangle in the graph. 
Our algorithm is based on this simple approach. 
There are only two problems that need to be addressed.
The first problem is that sampling edges uniformly in a graph in which the degrees have high variability is too costly. 
The second problem, which also stems from the variability of the degrees in the graph, is that revealing the entire neighborhood of a vertex can be too costly, depending on its degree. 

This is where the arboricity of the graph comes into play.
For a graph of arboricity $\Gamma$, as was shown in~\cite{ELR20}, the fraction of edges in the subgraph induced on {\em heavy} vertices, that is, vertices with degree greater than $c\Gamma/\eps$ where $c$ is some absolute constant, is at most $\eps/2$.
Therefore, if $\Gamma$ was given to us as a parameter then we could, in some sense, ignore the subgraph induced on vertices of degree greater than $\Theta(\Gamma/\eps)$ 
since a graph which is $\eps$-far from being triangle free still have $\Omega(\eps m)$ violating edges (and $\Omega(\eps m)$ edge-disjoint triangles) even after we remove this subgraph entirely.
Ignoring this subgraph allows us on one hand to sample edges almost uniformly from the resulting graph while making only $\Omega(\Gamma/\bar{d})$ queries, and also guarantees that there are $\Omega(\eps m)$ violating edges for which both endpoints are not heavy. This solves the two problems we had with taking the simple approach.
 
However a bound on the arboricty of the graph is not given to the algorithm as a parameter.
Since approximating the arboricity of a graph up to a constant factor is not possible in sublinear time (to see this consider a graph with a hidden clique), we estimate a different parameter which we informally refer to as the {\em effective arboricity} of the graph.
We show that this parameter suffices for our needs. In fact, this parameter could be much smaller than $\Gamma$, in which case the complexity of our algorithm is better than $O(\Gamma/\bar{d} + \Gamma)$.          
We reduce the problem of approximating the effective arboricity of the graph to the problem of estimating the number of edges in the graph in which we remove the subgraph induced on heavy vertices, where heavy vertices are defined with respect to increasing thresholds. We stop increasing our threshold once the estimation of the number of edges is sufficiently large. 
As we prove, with high constant probability, our approximation to the effective arboricity is bounded by $O(\Gamma)$ which leads to a tester with running time $O(\Gamma/\bar{d} + \Gamma)$, as claimed.    


\subsection{Lower Bounds}
Our first lower bound of $\Omega(\Gamma/\bar{d})$ for graphs in which $\bar{d} \leq 1$ is based on a simple hitting argument. 
Specifically, construct a graph which is $1/3$-far from being triangle free in which $\Omega(\Gamma/\bar{d} + \Gamma) =  \Omega(\Gamma/\bar{d})$ queries are required in order to sample a vertex which is not isolated with probability that is at least $1/3$.

Our other two lower bounds are simple adaptations of the lower bound of $\Omega(\min\{\bar{d}, n/\bar{d}\})$ queries presented in Alon et al.~\cite{AKKR08}.

\section{Preliminaries}
Let $G = (V, E)$ be an undirected graph and let $\bar{d} = 2m/n$ denote the average degree of $G$ where $n = |V|$ and $m = |E|$.  
For each vertex $v\in V$, let ${\rm deg}(v)$ denote the number of neighbors of $v$. 
For a subset of vertices $S \subseteq V$ we denote by $G([S])$ the subgraph induced on $S$.
For a directed graph $D$ we denote by $\bar{d}_{out}(D)$ the average out-degree of $D$.

A graph $G$ is triangle free if for every three vertices, $u, v, w$ in $G$ at least one pair in $\{ \{u, v\}, \{v, w\}, \{w, u\}\}$ is not an edge of $G$.
A graph $G$ is $\eps$-far from being triangle free if more than $\eps m$ edges need to be removed in order to make $G$ triangle free.

In the general graph model the tester accesses the graph via the following oracle queries.
\begin{enumerate}
\item Degree queries: on query $v$ the oracle returns ${\rm deg}(v)$.
\item Neighbor queries: on query $(v, i)$ where $i \in [{\rm deg}(v)]$, the oracle returns the $i$-th neighbor of $v$.
\item Vertex-pair queries: on query $\{v, u\}$ the oracle returns whether there is an edge between $u$ and $v$.
\end{enumerate}

An algorithm is a tester for the property of triangle freeness if given a proximity parameter $\eps$ and access to an input graph $G$, it accepts $G$ with probability at least $2/3$ if $G$ is triangle free and 
rejects $G$ with probability at least $2/3$ if $G$ is $\eps$-far from being triangle free. If the tester always accepts graphs which are triangle free we say it has one-side error. Otherwise we say it has two-sided error.   

\begin{claim}\label{clm:trg}
A graph $G = (V, E)$ which is $\eps$-far from being triangle free has at least $\eps m / 3$ edge-disjoint triangles.
\end{claim}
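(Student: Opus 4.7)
The plan is to prove the statement by a standard greedy/maximality argument. Let $\calT$ be a \emph{maximal} (with respect to inclusion) collection of pairwise edge-disjoint triangles in $G$, and let $t = |\calT|$. I will argue that the edge set $F$ consisting of the $3t$ edges used by the triangles in $\calT$ is a ``triangle hitting set'' in the sense that $G' = (V, E \setminus F)$ is triangle free, and then compare $|F|$ to the distance of $G$ from triangle-freeness.

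For the first step, suppose toward a contradiction that $G'$ still contains some triangle $T'$. Then $T'$ shares no edge with any triangle in $\calT$ (because all such edges were removed to form $G'$), so $\calT \cup \{T'\}$ is a strictly larger collection of edge-disjoint triangles, contradicting maximality of $\calT$. Hence $G'$ is triangle free, and we have exhibited a set of $3t$ edges whose deletion makes $G$ triangle free.

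For the second step, since $G$ is $\eps$-far from being triangle free, by definition any set of edges whose removal yields a triangle-free graph has size greater than $\eps m$. Applied to $F$, this gives $3t = |F| > \eps m$, and in particular $t \geq \eps m/3$, as required. The argument has no real obstacle; the only point worth flagging is that one must use a \emph{maximal} (not maximum) collection, since it is maximality alone that is needed to guarantee the residual graph is triangle free, and the existence of a maximal collection is immediate (start from the empty set and keep adding edge-disjoint triangles as long as possible, which must terminate since $E$ is finite).
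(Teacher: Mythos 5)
Your proof is correct and uses essentially the same argument as the paper: the paper phrases it as a greedy procedure (repeatedly find a triangle and delete its three edges until none remain), which is precisely a constructive way of obtaining the maximal edge-disjoint collection $\calT$ you describe, and then both arguments conclude by comparing the $3t$ deleted edges against the $\eps m$ distance bound. The only cosmetic difference is that you make the ``maximal, not maximum'' point explicit, which is a fine clarification but not a new idea.
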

\begin{proof}
Consider a procedure that given a graph $G$, as long as there is triangle, $t$ in $G$ it deletes all the edges of $t$ and proceeds in this manner until there are no triangles in the graph.
The number of edges which are deleted by this process is at least $\eps m$. Therefore the number of edge disjoint triangles that are deleted is at least $\eps m /3$. The claim follows.  
\end{proof}

The {\em arboricity} of an undirected graph $G$, denoted by $\Gamma(G)$, is the minimum number of forests into which its edges can be partitioned. Equivalently it is the minimum number of spanning forests needed to cover all the edges of the graph.

\section{The Algorithm}

\subsection{First Step: computing the threshold for defining heavy vertices}

As described above, for an input graph $G$, the number of edges in the subgraph induced on the heavy vertices w.r.t. the threshold $4\Gamma(G)/\eps$ is at most $(\eps/2) |E(G)|$ (see Claim~\ref{clm:arb}).
Therefore, when testing triangle freeness, we may, roughly speaking, ignore this subgraph with the hope of obtaining better complexity. 
Since $\Gamma(G)$ is not given to the algorithm as a parameter, we compute, in Algorithm~\ref{pro:alpha}, a different parameter of the graph, denoted by $\Gamma^*$, which is, roughly speaking, an approximation of the {\em effective} arboricity of the graph. 
In order to specify the guarantees on $\Gamma^*$ we shall need a couple of definitions.

\begin{definition}
For a graph $G = (V, E)$ and a threshold $t$ we define the set of {\em heavy vertices with respect to $t$} as $H_t(G) = \{v\in V: d(v) > t\}$ 
and the set of {\em light vertices with respect to $t$} as $L_t(G) = V \setminus H_t(G)$.
\end{definition}
When $G$ is clear from the context we may simply use $H_t$ and $L_t$. 
Using the definition of $H_t(G)$, we next define the graph $H(G, t)$ which is defined w.r.t. $G$ and a threshold $t$.

\begin{definition}[\bf The undirected graph $H(G, t)$]
For a graph $G = (V, E)$ and a threshold $t$, the graph $H(G, t)$ is an undirected graph defined as follows.
The set of vertices of $H(G, t)$ is $V$ and the set of edges of $H(G, t)$ is $E(G)\setminus \{\{u, v\} : u\in H_t(G) \text{ and } v \in H_t(G)\}$.
Namely, $H(G, t)$ is the graph $G$ after removing the edges for which both endpoints are heavy with respect to $t$. 
\end{definition}

\begin{claim}\label{clm:arb}
For a graph $G$, $\Gamma' \geq \Gamma(G)$ and $\eta\in (0,1]$ it holds that $|E(H(G, \Gamma'/\eta))| \geq (1-2\eta)|E(G)|$.
\end{claim}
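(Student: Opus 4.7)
The plan is to bound directly the number of edges removed in going from $G$ to $H(G, \Gamma'/\eta)$, namely the number of edges with both endpoints in $H_{\Gamma'/\eta}(G)$, and show this is at most $2\eta |E(G)|$. Write $t = \Gamma'/\eta$ for brevity.

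First I would bound the number of heavy vertices by a simple degree-counting argument: every vertex in $H_t$ has degree strictly greater than $t = \Gamma'/\eta$, so
\[
t \cdot |H_t| < \sum_{v \in H_t} {\rm deg}(v) \leq \sum_{v \in V} {\rm deg}(v) = 2m,
\]
which gives $|H_t| < 2m\eta/\Gamma'$.

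Next I would bound the number of edges in the induced subgraph $G[H_t]$ using the arboricity hypothesis. Since arboricity is monotone under taking subgraphs, $\Gamma(G[H_t]) \leq \Gamma(G) \leq \Gamma'$, so $E(G[H_t])$ is partitioned into at most $\Gamma'$ forests. Each forest on a vertex set of size $|H_t|$ has at most $|H_t|$ edges, hence
\[
|E(G[H_t])| \leq \Gamma' \cdot |H_t| < \Gamma' \cdot \frac{2m\eta}{\Gamma'} = 2\eta m.
\]

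Finally, since $E(H(G,t)) = E(G) \setminus E(G[H_t])$ by definition of $H(G,t)$, combining the two bounds yields $|E(H(G,t))| \geq m - 2\eta m = (1-2\eta)|E(G)|$, as required. There is no real obstacle here; the only thing to double-check is the arboricity bound on an arbitrary subgraph (which is immediate from the forest-cover characterization, using that any forest-cover of $G$ restricts to a forest-cover of $G[H_t]$ of the same size).
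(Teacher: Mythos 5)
Your proof is correct and follows essentially the same degree-counting plus arboricity-bound argument as the paper; the only cosmetic difference is that you work directly with $\Gamma'$ throughout, whereas the paper first proves the case $\Gamma' = \Gamma(G)$ and then invokes monotonicity of $|E(H(G,x))|$ in $x$.
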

\begin{proof}
We shall prove the claim about $\Gamma' = \Gamma$. 
The general claim will follow from the fact that $|E(H(G, x)|$ is monotonically non-decreasing in $x$. 
Let $G([H_t])$ denote the sub-graph induced on $H_t(G)$ where $t = \Gamma/\eta$ and let $k$ denote the number of edges of this graph.
Our goal is to show that $k < 2\eta m $ where $m = |E(G)|$.
The sum of the degrees of vertices in $H_t(G)$ is greater than $t \cdot |H_t(G)|$, therefore $m > t\cdot |H_t(G)|/2$.
On the other hand, since the arboricity of $G([H_t])$ is also bounded by $\Gamma$ it follows that $k \leq |H_t(G)| \cdot \Gamma$.
Therefore $k < 2m/t \cdot \Gamma = 2\eta m$, as desired.
\end{proof}

\medskip
The guarantees on $\Gamma^*$, which is the return value of Algorithm~\ref{pro:alpha} (that will be described next), are as described in the following claim.

\begin{claim}\label{lem:alpha}
With probability at least $5/6$, $\Gamma^*$ returned by Algorithm~\ref{pro:alpha} is such that:
\begin{enumerate}
\item $|E(H(G, t))| \geq (1-(\eps/6))m$ where $t = \Gamma^*/(48\epsilon)$,
\item $\Gamma^* \leq 2 \Gamma(G)$.
\end{enumerate}
\end{claim}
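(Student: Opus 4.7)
The plan is to analyze Algorithm~\ref{pro:alpha}, which I expect to proceed by iterating over a geometrically increasing sequence of candidate thresholds $t_1 < t_2 < \ldots$, using a sampling-based subroutine to estimate $|E(H(G,t_i))|$ at each, and halting at the first $t_i$ whose estimate exceeds a cutoff just below $m$. The returned value $\Gamma^*$ would be a constant-factor rescaling of the halting threshold, calibrated so that the relation $t = \Gamma^*/(48\eps)$ carries the intended meaning.

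The first step would be to establish a uniform concentration bound on the edge-count estimator. At each threshold $t_i$ the estimator $\widetilde{E}_i$ should approximate $|E(H(G,t_i))|$ within a small multiplicative factor (say $1 \pm \eps/24$) with high probability. Since the algorithm tries only $O(\log n)$ candidate thresholds, a union bound would guarantee that \emph{all} estimates are simultaneously accurate with probability at least $5/6$, provided the per-threshold failure probability is $O(1/\log n)$. The rest of the argument would be conditioned on this good event.

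Property~1 would then follow directly from the halting criterion: when the algorithm halts at threshold $t$, the estimate satisfies $\widetilde{E} \geq (1-\eps/12)m$ for an appropriate cutoff, and the concentration bound upgrades this to $|E(H(G,t))| \geq (1-\eps/6)m$, which is exactly Property~1. For Property~2 I would apply Claim~\ref{clm:arb} with $\eta$ chosen to be a small constant multiple of $\eps$, so that $(1-2\eta)$ strictly exceeds the halting cutoff plus the estimation slack. The claim then guarantees that $|E(H(G,\Gamma(G)/\eta))|$ is already large enough for its estimate to trigger the halting condition, so the algorithm must halt at some threshold $t \leq \Gamma(G)/\eta$ (modulo one step of the geometric progression). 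Inverting the rescaling $t = \Gamma^*/(48\eps)$, with the constants $48$ and the geometric ratio chosen precisely to absorb the combined slack, would yield $\Gamma^* \leq 2\Gamma(G)$.

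The main technical obstacle I expect is the design and analysis of the edge-count estimator itself. Since $|E(H(G,t))|$ excludes exactly those edges whose \emph{both} endpoints have degree greater than $t$, classifying a sampled edge requires degree queries on both endpoints, and the variance of a natural estimator depends on the degree distribution in a way that interacts nontrivially with $t$. Attaining failure probability $O(1/\log n)$ at every candidate threshold while keeping the per-threshold query cost within the budget of $\tilde{O}(\Gamma/\bar{d} + \Gamma) \cdot \poly(1/\eps)$ — and then carefully coordinating constants so that the factors $48$ and $2$ come out exactly as stated — will be the delicate part of the argument.
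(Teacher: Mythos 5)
Your proposal matches the paper's proof step for step: the same union bound over the $O(\log n)$ geometric thresholds to get simultaneous accuracy with probability $5/6$, the same derivation of Property~1 from the halting cutoff $(1-\eps/12)m$ plus estimation slack, and the same use of Claim~\ref{clm:arb} at a threshold around $\Gamma(G)/\eta$ to force halting within one geometric step and thereby bound $\Gamma^* \leq 2\Gamma(G)$. The ``delicate part'' you flag --- designing the edge-count estimator via degree/orientation queries with per-threshold failure $O(1/\log n)$ --- is exactly what Algorithm~\ref{pro:est} and Claim~\ref{clm:est} handle.
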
 

\medskip
Algorithm~\ref{pro:alpha} proceeds in iterations where in each iteration it multiplies $\Gamma^*$ by a factor of $2$, where initially $\Gamma^*$ is set to $1$. 
It stops when the estimated number of edges in $E(H(G, t))$, for $t$ which is $\Theta(\Gamma^{*}/\eps)$, is at least $|E(G)| (1-\Theta(\eps))$.
In order to estimate the number of edges in $E(H(G, t))$, Algorithm~\ref{pro:alpha} calls Algorithm~\ref{alg:est}.

In turn, Algorithm~\ref{alg:est} uses the directed graph $D(G, t)$, which we defined momentarily, that is constructed from $G$ and can be accessed by making a constant number of queries to $G$. 

\begin{definition}[\bf The directed graph $D(G, t)$]
The graph $D(G, t)$ is a directed version of the graph $H(G, t)$ in which we orient the edges as follows.
For every edge $\{u , v\}$ of $H(G, t)$, we orient the edge from $u$ to $v$ if:
(a) $u \in L_t$ and $v \in H_t$ or (b) both $u \in L_t$ and $v \in L_t$ and $id(u) < id(v)$.
Otherwise, we orient it from $v$ to $u$. 
\end{definition}

\begin{algorithm}[ht]
\DontPrintSemicolon
\KwInput{Access to a graph $G$ and parameters $n$, $m$ and $\eps \in (0, 1]$}
\KwOutput{$\Gamma^*$ as described in Lemma~\ref{lem:alpha}}
Set $\Gamma_1 = 1$\; 
\For{$i = 1$ \textbf{to} $\log n$} {
Run Algorithm~\ref{pro:est} on $G$ with parameters $n$, $m$, $\eps/24$, $t_i$ and $\delta = \Theta(\log \log n)$ where $t_i \eqdef  \Gamma_i/(24\eps)$ \label{step1e}. Let $Z_i$ denote the returned value.\;
If $Z_i \leq (1 - \eps/12) m$ then set $\Gamma_{i+1} = 2 \Gamma_{i}$, otherwise, return $\Gamma_i$\label{step:alpha4}\;
}
\caption{Compute $\Gamma^*$}\label{pro:alpha}
\end{algorithm}

\begin{algorithm}[ht]
\DontPrintSemicolon
\KwInput{Access to an undirected graph $G$ and parameters parameters $n$, $m$, $\eps$, $t$ and $\delta$, where $n = |V(G)|$ and $m = |E(G)|$}
\KwOutput{Estimation to the number of edges of $H(G, t)$}
Sample $r = \Theta(\delta^{-1}\eps^{-2} t/\bar{d})$, where $\bar{d} = m/n$, vertices $v_1, \ldots, v_r$, uniformly at random from $V(G)$.\;
For each $i\in [r]$:\;
\begin{enumerate}
\item Sample a random neighbor of $v_i$, $u$. If the edge between $u$ and $v_i$ is oriented from $v_i$ to $u$ in $D$ then set $Y_i = 1$, otherwise set $Y_i = 0$\label{alg1:step1}
\item Set $X_i = \frac{(d_{\rm out}(v) + d_{\rm in}(v)) \cdot Y_i}{t}$\label{alg1:step2}
\end{enumerate}
Return $X = \frac{t \cdot |V(G)|}{r} \cdot \sum_{i \in [r]} X_i$\;
\caption{Estimate the number of edges of $H(G, t)$}\label{pro:est}
\end{algorithm}\label{alg:est}

The following claim specifies the guarantees of Algorithm~\ref{alg:est}.

\begin{claim}\label{clm:est}
Given a query access to a graph $G$ and parameters $n$, $m$, $\eps$, $t$ and $\delta$ where $n = |V(G)|$ and $m = |E(G)|$, Algorithm~\ref{pro:est} outputs $X$ such that w.p. at least $1- 2^{1/\delta}$,
$(1-\eps)m' \leq X \leq (1+\eps)m'$ if $m' \geq (1-2\eps)m$, and $X < (1-\eps)m$, otherwise, where $m' = |E(H(G, t))|$.
\end{claim}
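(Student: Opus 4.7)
The plan is to show that $X$ is an unbiased estimator of $m' = |E(H(G,t))|$, that each $X_i$ takes values in $[0,1]$, and then conclude by a multiplicative Chernoff bound. Both the unbiasedness and the boundedness rely on one simple observation: every light vertex $v$ satisfies $d^G(v) = d_{\rm out}(v)+d_{\rm in}(v)$, because none of its edges is heavy--heavy, so its $G$-neighborhood coincides with its $H(G,t)$-neighborhood.

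For the expectation, I condition on $v_i = v$. If $v \in H_t$, then every edge of $D(G,t)$ incident to $v$ is oriented into $v$ (heavy--heavy edges are excluded from $H(G,t)$, and light--heavy edges are oriented toward the heavy endpoint), so $d_{\rm out}(v) = 0$ and $Y_i = 0$ deterministically, whence $X_i = 0$. If $v \in L_t$, a uniform neighbor in $G$ lies on an edge oriented out of $v$ in $D$ with probability $d_{\rm out}(v)/d^G(v)$, and therefore
\[
\E[X_i \mid v_i = v] \;=\; \frac{d_{\rm out}(v)+d_{\rm in}(v)}{t}\cdot \frac{d_{\rm out}(v)}{d^G(v)} \;=\; \frac{d_{\rm out}(v)}{t}.
\]
Averaging over $v$ uniform in $V$ and using $\sum_{v\in V} d_{\rm out}(v) = |E(D(G,t))| = m'$, I obtain $\E[X_i] = m'/(nt)$, and hence $\E[X] = (tn/r)\cdot r\cdot \E[X_i] = m'$. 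The same case split gives $X_i\in[0,1]$: the value is $0$ for heavy $v_i$, while for light $v_i$ we have $d^G(v_i)\le t$, so $X_i = d^G(v_i)Y_i/t \le 1$.

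With $S := \sum_{i=1}^r X_i$ and $\mu_S := \E[S] = rm'/(nt)$, so that $X = (tn/r)\,S$, the multiplicative Chernoff bound yields $\Pr[|S-\mu_S|>\eps\mu_S] \le 2\exp(-\eps^2\mu_S/3)$. In the main case $m'\ge (1-2\eps)m$ we have $\mu_S = \Theta(r\bar d/t)$, so the prescribed choice $r = \Theta(\eps^{-2}t/\bar d)$, with the hidden constant scaled by the desired confidence factor in $\delta$, pushes the failure probability below $2^{-\Omega(\delta)}$; rescaling by $tn/r$ gives $(1-\eps)m'\le X\le(1+\eps)m'$. In the complementary case $m' < (1-2\eps)m$, the event $X\ge(1-\eps)m$ forces $S$ to exceed $\mu_S$ by at least $\eps m\cdot(r/(tn)) \ge \eps\mu_S$ (using $m\ge m'$), so the same one-sided Chernoff estimate handles it.

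The only non-routine step is the expectation calculation: one has to check that sampling by $G$-degree and weighting by $d_{\rm out}(v)+d_{\rm in}(v)$ produces the right answer despite these two quantities potentially disagreeing on heavy vertices. The disagreement is harmless because the heavy-vertex contribution is already killed by $Y_i=0$, while for light vertices the two degrees coincide. Once this reconciliation is in place, boundedness $X_i\in[0,1]$ and the standard Chernoff estimate finish the proof.
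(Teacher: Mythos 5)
Your proof is correct and takes essentially the same approach as the paper: compute $\E[X_i\mid v_i=v]=d_{\rm out}(v)/t$, observe $X_i\in[0,1]$, conclude $\E[X]=m'$, and apply a multiplicative Chernoff bound to the sum. Your reconciliation of $d^G(v)$ versus $d_{\rm out}(v)+d_{\rm in}(v)$ (identical on light vertices, irrelevant on heavy ones since $Y_i=0$ there) is the same observation the paper makes implicitly; the only cosmetic difference is in the case $m'<(1-2\eps)m$, where the paper invokes a ``coupling argument'' while you apply Chernoff directly. One small caution there: the intermediate inequality $\eps m\cdot r/(tn)\ge\eps\mu_S$ is not by itself enough (when $\mu_S$ is small the relative-deviation bound $\exp(-\eps^2\mu_S/3)$ is weak); what carries the argument is that the \emph{absolute} deviation $\eps m r/(tn)=\eps r\bar d/t$ is of the same order as in the main case, so the same exponent $\exp(-\Theta(\delta))$ emerges from the general upper-tail Chernoff bound (with $m r/(tn)$ serving as an upper bound on the mean).
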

\begin{proof}
First observe that $Y_i$ is an indicator variable to the event that the edge selected in the $i$-th iteration of Algorithm~\ref{pro:est}, $\{v_i, u\}$ is an out-edge of $v_i$ in $D(G, t)$.
Since $V(G) = V(D(G,t))$ we obtain the following. 
\begin{equation}
\E(Y_i) = \frac{1}{|V(D(G, t))|} \cdot \Sigma_{v\in V(D(G, t))} \frac{d_{\rm out}(v)}{d_{\rm out}(v) + d_{\rm in}(v)}\;.
\end{equation} 
Similarly,
\begin{equation}
\begin{split}
\E(X_i) = &  ~\frac{1}{|V(D(G, t))|} \cdot \Sigma_{v\in V(D(G, t))} \frac{d_{\rm out}(v) + d_{\rm in}(v)}{t} \cdot \frac{d_{\rm out}(v)}{d_{\rm out}(v) + d_{\rm in}(v)} \\
= &  ~\frac{1}{|V(D(G, t))|} \cdot \Sigma_{v\in V(D(G, t))}  \frac{d_{\rm out}(v)}{t} = \frac{\bar{d}_{\rm out}(D(G, t))}{t}\;.
\end{split}
\end{equation}
Observe that if $v_i \in H_t(G)$ then $d_{\rm out}(v) = 0$ and so $Y_i = X_i= 0$.
On the other hand, if $v_i \in L_t(G)$ then $d_{\rm out}(v) + d_{\rm in}(v) \leq t$. 
Therefore, in both cases $X_i \in [0, 1]$.
Thus, for $r$ which is $\Theta(1/(\delta\eps^{2}E(X_1)))$ it follows by Multiplicative Chernoff's bound (see Theorem~\ref{thm:chermul} in the appendix) 
 that with probability at least $1-2^{1/\delta}$, 
$$(1-\epsilon) E(X_1)  \leq \sum_{i\in [r]} {X_i}/r \leq (1+\epsilon) E(X_1)\;.$$
And so 
$$t \cdot |V(G)| \cdot (1-\epsilon)  E(X_1) \leq X \leq  t \cdot |V(G)| \cdot (1+\epsilon) E(X_1)\;.$$
Since $t \cdot |V(G)| \cdot E(X_1) = |E(H(G, t))|$ we obtain that 
$$(1-\eps) \cdot |E(H(G, t))| \leq X \leq (1+\eps) \cdot |E(H(G, t))|\;.$$

Hence, if $|E(H(G, t))| \geq (1-2\eps)m$ then $\bar{d}_{\rm out}(D(G, t)) = \Theta(\bar{d}(G))$ and so $\E(X_1) = \Theta(\bar{d}(G)/t)$, implying that $r = \Theta(1/(\delta\eps^{2}E(X_1)))$, as desired.
On the other hand, if $|E(H(G, t))| < (1-2\eps)m$, then it is not hard to see that the claim follows by a straightforward coupling argument. 
More specifically, first assume that $|E(H(G, t))| = (1-2\eps)m$ and so by the above, with probability at least $1-2^{1/\delta}$,
$$X \leq (1+\eps)|E(H(G, t))| = (1+\eps)(1-2\eps)m < (1-\eps)m$$.
Therefore, it follows by a coupling argument that with probability at least $1-2^{1/\delta}$, $X < (1-\eps)m$ also in the case that $|E(H(G, t))| < (1-2\eps)m$.
\end{proof}

\begin{claim}
For an input graph $G$ and parameters $n$, $m$, $\eps$, $t$ and $\delta$, where $n = |V(G)|$ and $m = |E(G)|$, the time complexity and query complexity of Algorithm~\ref{pro:est} is $O(\delta^{-1}\eps^{-2} t/\bar{d}(G))$.
\end{claim}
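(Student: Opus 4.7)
The statement to prove is a complexity bound, and since the algorithm is essentially a sampling loop with a fixed per-iteration cost, the plan is to show that (i) each of the $r$ iterations uses only a constant number of oracle queries and $O(1)$ additional time, and (ii) the sample size itself is $r = O(\delta^{-1}\eps^{-2} t/\bar{d}(G))$ by construction. Multiplying these gives the claimed bound, and time complexity matches query complexity because between queries the algorithm only performs comparisons and arithmetic.

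The heart of the argument is auditing the per-iteration cost. In iteration $i$, the algorithm (a) needs to know $d(v_i)$ in order to sample a uniform neighbor, which is one degree query; (b) performs one neighbor query to obtain $u$; and (c) needs to determine the orientation of $\{v_i,u\}$ in $D(G,t)$, which by the definition of $D(G,t)$ depends only on whether each of $v_i,u$ is heavy and, if both are light, on their identifiers. The heaviness of $u$ is decided by one more degree query on $u$, while identifiers are part of the sampled vertex labels and require no further access. That is a total of three oracle queries per iteration.

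The one subtlety I would flag, and the main thing to verify carefully, is that Step~\ref{alg1:step2} requires the value $d_{\rm out}(v_i)+d_{\rm in}(v_i)$, which is the degree of $v_i$ in the graph $H(G,t)$, not in $G$. Naively reading it off could force the algorithm to enumerate all neighbors of $v_i$ and classify each as heavy or light, which would blow up the complexity. The key observation is that this is unnecessary: if $v_i\in H_t(G)$ then, as noted in the proof of Claim~\ref{clm:est}, $d_{\rm out}(v_i)=0$ and hence $Y_i=X_i=0$, so the quantity $d_{\rm out}(v_i)+d_{\rm in}(v_i)$ never needs to be computed in that branch; and if $v_i\in L_t(G)$ then every edge of $G$ incident to $v_i$ survives in $H(G,t)$ (since at most one endpoint is heavy), so $d_{\rm out}(v_i)+d_{\rm in}(v_i)=d(v_i)$, which is already known from the first degree query. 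Thus in both cases $X_i$ can be evaluated with the three queries above and $O(1)$ arithmetic, and no enumeration of neighbors is required.

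With the per-iteration cost established as $O(1)$, the overall bound is immediate: summing over the $r=\Theta(\delta^{-1}\eps^{-2}t/\bar{d}(G))$ iterations gives query complexity $O(\delta^{-1}\eps^{-2}t/\bar{d}(G))$, and since all non-query work is a constant number of comparisons and field operations per iteration, the running time obeys the same bound. I expect the only potentially delicate point to be the observation about $d_{\rm out}(v_i)+d_{\rm in}(v_i)$; everything else is essentially bookkeeping.
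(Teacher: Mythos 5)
Your proof is correct and takes essentially the same route as the paper's. You identify the same key observation — that $d_{\rm out}(v_i)+d_{\rm in}(v_i)$ need not be computed when $v_i$ is heavy (since $Y_i=X_i=0$), and equals $d_G(v_i)$ when $v_i$ is light since light vertices keep all their incident edges in $H(G,t)$ — and you arrive at the same per-iteration accounting of a constant number of degree/neighbor queries, multiplied by the sample size $r$; if anything you spell out the light/heavy case analysis a bit more explicitly than the paper does.
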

\begin{proof}
The claim follows from the fact that in order to implement Steps~2.\ref{alg1:step1} and~2.\ref{alg1:step2} of Algorithm~\ref{pro:est} the algorithm makes a constant number of queries to $G$. Specifically,  for each $v_i$ the algorithm either performs a single degree query (in case $v_i \in H_t(G)$ then $Y_i = X_i = 0)$ 
or a single adjacency-list query and $2$ degree queries in case $v_i  \in L_t(G)$ (the orientation of the edge $\{v_i, u\}$ can be determined by the degrees and ids of $v_i$ and $u$). 
The implementation of Step~2.\ref{alg1:step2} does not require additional queries as $d_{\rm out}(v_i) + d_{\rm in}(v_i) = d_G(v_i)$.
\end{proof}

We are now ready to prove Claim~\ref{lem:alpha}.

\begin{proof}[Proof of Claim~\ref{lem:alpha}]
For the sake of analysis assume that Algorithm~\ref{pro:alpha} performs all $\log n$ iterations of the for-loop.
Let $E_i$ denote the event that $Z_i$ is as claimed in Claim~\ref{clm:est}.
By Claim~\ref{clm:est}, for a fixed $i$, the probability that $E_i$ occurs is at least $1/(6\log n)$ for an appropriate setting of $\delta$. 
Therefore, by the union bound the probability that $E_i$ occurs for all $i\in [\log n]$ is at least $5/6$. 
From this point on we condition on the event that indeed $E_i$ occurs for all $i\in [\log n]$.

Let $m_i = |E(H(G, t_i))|$ for every $i\in [\log n]$.
Let $j$ denote the iteration in which Algorithm~\ref{pro:alpha} returns a value. 
By Step~\ref{step:alpha4} of Algorithm~\ref{pro:alpha}, $Z_j > (1 - \eps/12) m$.  
By Claim~\ref{clm:est}, it follows that $m_j \geq (1-\epsilon/6)m$, as desired (to see this note that if $m_j < (1-\epsilon/6)m$ then By Claim~\ref{clm:est}, $Z_j < (1-\eps/12)m$).

To prove the claim about $\Gamma^*$ we consider the minimum $j'\geq 1$, for which $2^{j'-1} \geq \Gamma$.
If $j < j'$ then clearly $\Gamma^* < \Gamma$, as desired. 
Otherwise, we claim that $j = j'$ (namely, that $\Gamma^* = 2^{j'-1}$) which implies that $\Gamma^* \leq 2\Gamma$, as desired.
To see this, first note that by Claim~\ref{clm:arb}, for any $\Gamma' \geq \Gamma$ and $\eta\in (0,1]$, $|E(H(G, \Gamma'/\eta))| \geq (1-2\eta)m$.
Therefore $m_j' \geq (1-\eps/24)m$.
Therefore, by Claim~\ref{clm:est}, $Z_{j'} \geq (1-\eps/24)m_j' \geq (1-\eps/24)^2m > (1-\eps/12)m$, which implies that the algorithms stops at the $j'$-th iteration.
\end{proof}

\subsection{Second Step: sampling edges almost uniformly given a threshold for heavy vertices}

Given a threshold $t$, Algorithm~\ref{pro:sample} samples an edge from $H(G, t)$ almost uniformly as described in the next claim.
\begin{claim}\label{clm:sample}
Algorithm~\ref{pro:sample} samples an edge from $H(G, t)$ such that for each edge, $e$, of $H(G, t)$, the probability to sample $e$ is in $\left[\frac{c_1}{m'}, \frac{c_2}{m'} \right]$, where $c_1$ and $c_2$ are absolute constants and $m' \eqdef |E(H(G, t))|$.
If $t$ is such that $m' \geq |E(G)|/2$ then the expected running time of the algorithm is $O(t/\bar{d}(G))$. 
\end{claim}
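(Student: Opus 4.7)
The plan is to analyze Algorithm~\ref{pro:sample} via the standard sample-and-verify scheme adapted to the orientation $D(G,t)$. I expect the algorithm to repeat the following trial until a success: (i)~sample $v$ uniformly from $V$; (ii)~sample $i$ uniformly from $[t]$; (iii)~if $v \in L_t(G)$ and $i \leq {\rm deg}(v)$, query the $i$-th neighbor $u$ of $v$ and decide (using ${\rm deg}(u)$ and the id of $u$) whether the edge $\{v,u\}$ is oriented from $v$ to $u$ in $D(G,t)$; if so, return $\{v,u\}$. Each trial uses $O(1)$ oracle queries.

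The crux of the analysis is that every edge $e = \{v,u\} \in E(H(G,t))$ has a unique orientation in $D(G,t)$; say $v \to u$, which forces $v \in L_t$. In a single trial, $e$ is returned if and only if the sampled vertex is $v$ (probability $1/n$) and the sampled index $i$ equals the position of $u$ in $v$'s neighbor list (probability $1/t$), and these two events are independent. Hence every edge of $H(G,t)$ is returned with exactly the same probability $1/(nt)$ per trial. Summing over the $m' = |E(H(G,t))|$ edges, the per-trial success probability is $m'/(nt)$, and the conditional distribution on a successful trial is uniform on $E(H(G,t))$, so the two-sided bound in the claim holds with $c_1 = c_2 = 1$ (or with absolute constants close to $1$, if Algorithm~\ref{pro:sample} caps the number of trials instead of repeating indefinitely, since then each edge is returned with probability $(1-(1-1/(nt))^r)/m'$, which lies in a constant range around $1/m'$ for $r = \Theta(t/\bar{d}(G))$).

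The number of trials is a geometric random variable with mean $nt/m'$. Under the assumption $m' \geq |E(G)|/2$ we have $nt/m' \leq 2t/\bar{d}(G)$, and since each trial runs in $O(1)$ time, the total expected running time is $O(t/\bar{d}(G))$, as claimed. The one subtlety I anticipate is verifying that the orientation test for $D(G,t)$ is implementable in $O(1)$ queries; this follows because the test depends only on whether each of $u,v$ is light or heavy (determined by a single degree query against $t$) and, in the light-light case, on the comparison of their ids, which is provided directly by the neighbor query.
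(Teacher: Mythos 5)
Your proof strategy matches the paper's in spirit, but you reconstructed the algorithm slightly incorrectly, and the resulting analysis does not quite apply to Algorithm~\ref{pro:sample} as actually stated. The paper's Algorithm~\ref{pro:sample} does \emph{not} test the orientation of $\{v,u\}$ in $D(G,t)$: once it has drawn $v\in L_t(G)$ and an index $j\in[t]$, it returns $\{v,u\}$ whenever $v$ has a $j$-th neighbor $u$, regardless of whether the edge is oriented $v\to u$ or $u\to v$. (The orientation $D(G,t)$ is used only inside Algorithm~\ref{pro:est}, for the edge-count estimation, not here.) Consequently, a light-light edge $\{v,u\}$ can be hit from either endpoint and is returned with per-trial probability $\frac{2}{nt}$, while a light-heavy edge can be hit only from its light endpoint and is returned with probability $\frac{1}{nt}$. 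The conditional distribution on a success is therefore not uniform, so your claim of $c_1=c_2=1$ is not what the algorithm achieves; instead the ratio between the largest and smallest per-trial probabilities is $2$, giving, after normalization, $c_1=1/2$ and $c_2=2$. This is still within absolute constant factors, so the statement holds, and your lower bound of $\frac{m'}{nt}$ on the per-trial success probability together with the assumption $m'\geq |E(G)|/2$ gives exactly the paper's $O(t/\bar d(G))$ expected running time. In short: your version of the algorithm (with the orientation filter) would give truly uniform sampling at the cost of one extra degree query per trial, whereas the paper trades that away and lives with a factor-2 non-uniformity; the proof outline and the running-time computation are otherwise the same.
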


\begin{proof}
Consider a single iteration of the while loop of Algorithm~\ref{pro:sample}.
For an edge $e$ in $H(G, t)$ let $p(e)$ denote the probability that $e$ is returned in this iteration of Algorithm~\ref{pro:sample}.
If $e$ is an edge such that both endpoints are in $L_t(G)$, then $p(e) = \frac{2}{n} \cdot \frac{1}{t}$\;.
If $e$ is an edge such that one endpoint in $L_t(G)$ and the other endpoint is in $H_t(G)$, then $p(e) = \frac{1}{n} \cdot \frac{1}{t}$\;.
Therefore for any two edges $e_1$ and $e_2$ in $H(G, t)$ the probability that $e_1$ is picked by the algorithm is at most twice the probability that $e_2$ is picked by the algorithm.
Since Algorithm~\ref{pro:sample} only returns edges in $H(G, t)$, the claim about the probability to sample an edge follows.

For a fixed iteration of the while loop, the probability that the algorithm returns one of the edges of $E(H(G, t))$ is at least $m' \cdot \frac{1}{n} \cdot \frac{1}{t}$ which is at least $\bar{d}(G)/(2t)$ in the case that $m' \geq |E(G)|/2$.
Therefore in this case the expected number of iterations of the while loop is at most $(2t/\bar{d}(G))$. Hence the claim about the expected running time follows.
\end{proof}

\begin{algorithm}[ht]
\DontPrintSemicolon
\KwInput{Access to a graph $G$ and a parameter $t$.}
\While{}{
Pick u.a.r. a vertex $v$ from $V(G)$\label{step1}\;
Pick u.a.r. $j \in [t]$\;
If $v \in L_t(G)$ and $v$ has a $j$-th neighbor, $u$, then return $\{v, u\}$.\;
}
\caption{Sample an edge from $H(G, t)$ almost uniformly}\label{pro:sample}\label{alg:sample}
\end{algorithm}

\begin{remark}
We remark that Algorithm~\ref{alg:sample} is stated as a Las Vegas algorithm. 
Moreover, if $m < |E(G)|/2$ then we can not obtain from Claim~\ref{clm:sample} any bound on the expected running time of the algorithm.
However, we note that since $t$ and $\bar{d}(G)$ are known then we can set a timeout for the algorithm (specifically $c t/\bar{d}(G)$ for some constant $c$) and incorporate the event that we were forced to stop the algorithm in the failure probability of the tester.
\end{remark}

\subsection{Putting things together - the algorithm for testing triangle freeness}

Using Algorithms~\ref{pro:est} and~\ref{pro:sample} we are now ready to describe our tester (Algorithm~\ref{alg:testing}). 
\begin{algorithm}[ht]
\DontPrintSemicolon
\KwInput{Access to a graph $G$ and parameters $n$, $m$, and $\eps$.}
Execute Algorithm~\ref{pro:est} with parameters $n$, $m$ and $\eps$ and let $\Gamma^*$ denote the returned value\;
Let $t = \Gamma^*/\eps$\;
\For{$i = 1$ \textbf{to} $s = \Theta(\eps^{-1})$} {
Execute Algorithm~\ref{pro:sample} on $G$ with parameter $t$ and let $\{u, v\}$ denote the edge returned by the algorithm.\label{step:edge}\;
If both $u \in L_t(G)$ and $v \in L_t(G)$ then return {\rm REJECT} if $N(u) \cap N(v) \neq \emptyset$\;
}
Return {\rm ACCEPT}\;
\caption{Testing Triangle-Freeness}\label{alg:testing}
\end{algorithm}

Since the tester has one-sided error (it rejects only if it finds a witness for violation, i.e., a triangle) its correctness follows from the following claim.  
\begin{claim}
If $G$ is $\eps$-far from being triangle free then {\bf Test Triangle Freeness} finds a triangle with probability at least $2/3$. 
The expected running time of the algorithm is $\tilde{O}\left(\Gamma/\bar{d}(G) + \Gamma \right)\cdot \poly(\eps^{-1})$.
\end{claim}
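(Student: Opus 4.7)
The plan is to prove correctness by conditioning on the high-probability event of Claim~\ref{lem:alpha}, namely that $\Gamma^*\le 2\Gamma(G)$ and $|E(H(G,t))|\ge (1-\eps/6)m$ for the threshold $t=\Theta(\Gamma^*/\eps)$ used by the tester. Under this event I would combine Claim~\ref{clm:trg}, which supplies at least $\eps m/3$ edge-disjoint triangles in $G$, with a counting argument that isolates those triangles the tester can actually catch. A triangle is \emph{detectable} exactly when it contains an edge with both endpoints in $L_t$, since only then does the sampled edge trigger the intersection check. Equivalently, a detectable triangle has at most one heavy vertex, and any non-detectable triangle must contain at least one heavy-heavy edge. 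Because the triangles are edge-disjoint these offending edges are distinct; their total number is $|E(G)|-|E(H(G,t))|\le \eps m/6$, so at most $\eps m/6$ triangles are lost, leaving at least $\eps m/6$ detectable triangles and hence at least $\eps m/6$ distinct light-light edges of $H(G,t)$ that lie on a triangle of $G$.

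Given this set of $\Omega(\eps m)$ ``good'' edges, Claim~\ref{clm:sample} guarantees that each call to Algorithm~\ref{pro:sample} returns any specific edge of $H(G,t)$ with probability at least $c_1/|E(H(G,t))|\ge c_1/m$, so a single iteration of the for-loop hits a good edge with probability $\Omega(\eps)$, and when that happens the intersection test certifies a triangle. With $s=\Theta(1/\eps)$ independent trials the overall probability of failing to sample any good edge is at most $(1-\Omega(\eps))^{\Theta(1/\eps)}\le 1/6$; combined with the $1/6$ failure bound of Claim~\ref{lem:alpha} by a union bound, the total success probability is at least $2/3$.

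For the expected running time, Algorithm~\ref{pro:alpha} performs a doubling sequence of calls to Algorithm~\ref{pro:est} and its cost is a geometric sum dominated by the last call, giving $\tilde O(\Gamma^*/\bar{d})\cdot\poly(1/\eps)=\tilde O(\Gamma/\bar{d})\cdot\poly(1/\eps)$, where the $\tilde O$ hides the $\log\log n$ factor from $\delta$. The main loop has $\Theta(1/\eps)$ iterations; each invokes Algorithm~\ref{pro:sample} at expected cost $O(t/\bar{d})=O(\Gamma/(\eps\bar{d}))$, using the fact that $|E(H(G,t))|\ge m/2$ holds under the conditioning above, and may then scan the neighborhoods of two light vertices at additional cost $O(t)=O(\Gamma/\eps)$. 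Summing over the loop yields $\tilde O(\Gamma/\bar{d}+\Gamma)\cdot\poly(1/\eps)$ in expectation, as claimed. The failure events from Claim~\ref{clm:sample}'s Las Vegas sampler are absorbed into the overall failure probability via Markov on the expected running time.

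The main obstacle I anticipate is the combinatorial step above: verifying that ignoring the subgraph induced on heavy vertices does not destroy the supply of edge-disjoint triangles guaranteed by Claim~\ref{clm:trg}, and, more subtly, that each surviving triangle actually contributes a light-light edge that the tester will probe. Both facts rely on the arboricity-based bound from Claim~\ref{clm:arb} (applied through Claim~\ref{lem:alpha}) to cap the heavy-heavy edges at a small $O(\eps m)$ fraction, and the proof must choose the constant in $t=\Theta(\Gamma^*/\eps)$ so that the slack between the $\eps m/3$ edge-disjoint triangles and the heavy-heavy edge budget remains a positive $\Omega(\eps m)$.
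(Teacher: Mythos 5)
Your proof is correct and follows essentially the same route as the paper: condition on the event of Claim~\ref{lem:alpha}, show that $\Omega(\eps m)$ of the edge-disjoint triangles from Claim~\ref{clm:trg} survive the removal of heavy-heavy edges and each contributes a distinct light-light edge which Algorithm~\ref{pro:sample} hits with probability $\Omega(\eps)$ per trial, then amplify over $\Theta(1/\eps)$ iterations and union-bound with the failure of Claim~\ref{lem:alpha}. Your accounting of ``detectable'' triangles by charging each lost triangle to a distinct heavy-heavy edge is an equivalent rephrasing of the paper's observation that every triangle of $H(G,t)$ has at most one heavy vertex, and you additionally spell out the expected running-time bound (doubling geometric sum for Algorithm~\ref{pro:alpha}, $O(t/\bar{d})$ per sample, $O(t)$ per neighborhood scan) that the paper's proof of this claim leaves implicit.
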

\begin{proof}
Let $G = (V, E)$ be an input graph which is $\eps$-far from being triangle free. 
There are at least $\eps m /3$ edge-disjoint triangles in $G$ (see Claim~\ref{clm:trg}).
Let $E_1$ denote the event that for $\Gamma^*$ that is return by Algorithm~\ref{pro:alpha} it holds that $|E(H(G, t))| \geq (1-(\eps/6))m$ where $t = \Gamma^*/\eps$.
By Claim~\ref{lem:alpha} $E_1$ occurs with probability at least $5/6$.
Given that $E_1$ occurred, it follows that there are at least $(\eps/3)m - (\eps/6)m = (\eps/6)m$ edge-disjoint triangles in $H(G, t)$.
Let $\{t_1, \ldots, t_k\}$ be an arbitrary subset of these edge-disjoint triangles where $k \eqdef (\eps/6)m$.
By the definition of $H(G,t)$ it holds that for every edge $\{u, v\} \in E(H(G,t))$ either $u\in L_t(G)$ or $v \in L_t(G)$.
Thus, for every $i\in [k]$ the triangle $t_i$ includes an edge $\{x_i, y_i\}$ such that both $x_i$ and $y_i$ are in $L_t(G)$.
Therefore, there are at least $k$ edges in $H(G, t)$ such that if Algorithm~\ref{pro:sample} returns one of these edges in Step~\ref{step:edge} of Algorithm~\ref{alg:testing} then Algorithm~\ref{alg:testing} rejects.
For every $i\in [s]$, let $E_{2,i}$ denote the event that Algorithm~\ref{pro:sample} returns one of these edges in the $i$-th iteration of Algorithm~\ref{alg:testing}.

By Claim~\ref{clm:sample}, given that $E_1$ occurred, the probability that $E_{2, i}$ occurs (given that the algorithm did not return {\rm REJECT} before the $i$-th iteration) is at least $c\eps$ for some absolute constant $c$.
Therefore the probability that both $E_1$ and $E_{2, i}$ occur for some $i\in [s]$ is at least $2/3$ for an appropriate setting of $s$.
Thus the algorithm rejects with probability at least $2/3$ as desired. 
\end{proof}

\section{Lower Bounds}
\subsection{Lower bound of $\Omega((\Gamma n)/m)$ for any $\Gamma$ and any feasible $m \geq 1$}
\begin{theorem}
For any $\Gamma \leq n-1$ and any $m\geq 1$ which is feasible w.r.t. $\Gamma$, any algorithm for testing triangle-freeness must perform $\Omega((\Gamma n)/m)$ queries where $n$, $m$ and $\Gamma$ denote the number of vertices, the number of edges and the arboricity of the input graph, receptively.
This lower bound holds even if the algorithm is allowed two-sided error.
\end{theorem}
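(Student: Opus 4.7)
The plan is to apply Yao's minimax principle with two distributions $\mathcal{D}_{\rm yes}$ and $\mathcal{D}_{\rm no}$ over $n$-vertex, $m$-edge, arboricity-$\Gamma$ graphs — the former triangle-free and the latter $1/3$-far from being triangle-free — and show that no algorithm making $o(n\Gamma/m)$ queries can distinguish them. The guiding intuition is that an arboricity-$\Gamma$ graph with $m$ edges can pack its entire edge set inside an active vertex subset of size only $\Theta(m/\Gamma)$, so an algorithm that never probes this set sees only trivial responses and thus cannot detect a triangle.

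To build $\mathcal{D}_{\rm no}$, I embed $k:=\lfloor m/\binom{2\Gamma}{2}\rfloor$ disjoint copies of $K_{2\Gamma}$ on a uniformly random subset of $V=[n]$, and fold the residual fewer-than-$\binom{2\Gamma}{2}$ edges into a single smaller clique on at most $2\Gamma$ vertices. Since $\Gamma(K_{2\Gamma})=\Gamma$ the overall arboricity is $\Gamma$, and Mantel's theorem forces at least $\Gamma^2-\Gamma$ deletions in each clique to eliminate all triangles, so the graph is $1/3$-far from triangle-free whenever $\Gamma\geq 2$. For $\mathcal{D}_{\rm yes}$ I use the analogous construction with each $K_{2\Gamma}$ replaced by a bipartite $K_{2\Gamma-1,2\Gamma}$, which has $2\binom{2\Gamma}{2}$ edges, is triangle-free, and has arboricity exactly $\Gamma$ by Nash--Williams; using $k/2$ copies the edge count matches and the active set has size $(4\Gamma-1)k/2$, agreeing with that of $\mathcal{D}_{\rm no}$ up to a $1+O(1/\Gamma)$ factor. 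In both cases the active set is a uniformly random subset of $V$ of size $\Theta(m/\Gamma)$, which one may verify remains $\Theta(m/\Gamma)$ in the feasibility range $\Gamma^2\leq m\leq n\Gamma$.

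The key probabilistic estimate is that for any fixed query of any of the three types the probability of a \emph{non-trivial} response — nonzero degree, an existing neighbor, or an ``edge'' answer to a pair query — is $O(m/(n\Gamma))$ in each distribution: a fixed vertex is active with probability $\Theta(m/(n\Gamma))$, and a fixed pair is an edge with probability at most $m/\binom{n}{2}\leq 2m/(n\Gamma)$ using $\Gamma\leq n$. Fix the tester's random tape via Yao and let $T$ be the event that every prior response has been trivial. Conditional on $T$, the next query is a deterministic function of the history and hence is the same random variable under $\mathcal{D}_{\rm yes}$ and $\mathcal{D}_{\rm no}$; a union bound over $Q$ queries then gives $\Pr[\neg T]\leq Q\cdot O(m/(n\Gamma))$ in each distribution, and conditional on $T$ the tester's output is identical on the two. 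Hence the total variation distance between the tester's outputs is $O(Qm/(n\Gamma))$, and a correct two-sided tester must achieve distance at least $1/3$, forcing $Q=\Omega(n\Gamma/m)=\Omega(\Gamma/\bar d)$.

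The main delicate point is handling an adaptive tester uniformly across all three query types, which the ``condition on the all-trivial-responses history'' device above resolves cleanly. The remaining subtleties are bookkeeping — tuning both distributions to attain $(n,m,\Gamma)$ exactly and to present random active sets of matching size $\Theta(m/\Gamma)$ across all feasible $m$ — which is secured by the clique/bipartite-clique pairing above up to rounding that does not affect the asymptotic lower bound.
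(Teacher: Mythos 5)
Your proof is correct, and it rests on the same fundamental idea as the paper's: an arboricity-$\Gamma$ graph with $m$ edges can hide all its edges on an active set of $\Theta(m/\Gamma)$ vertices, so an algorithm making $o(n\Gamma/m)$ queries almost surely sees only trivial responses (degree $0$, ``no edge'') and hence cannot detect a triangle. The constructions differ, though. The paper uses a single deterministic graph: two sets $V_1,V_2$ of size $2m/(3\Gamma)$ connected by $\Gamma/2$ perfect matchings and each joined completely to a small set $V_3$ of size $\Gamma/2$, and then argues informally that a low-query tester ``cannot distinguish $G$ from the empty graph.'' You instead run Yao's principle with two explicit distributions, disjoint random $K_{2\Gamma}$'s for the no-case and disjoint random $K_{2\Gamma-1,2\Gamma}$'s for the yes-case, matched so that both have $n$ vertices, $\Theta(m)$ edges, arboricity exactly $\Gamma$ (via Nash--Williams), and a uniformly random active set of size $\Theta(m/\Gamma)$. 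This buys you something real: since $m$ is a parameter of the tester, the paper's comparison to the empty graph (which has $m=0$) is not literally a valid yes-instance, whereas your bipartite-clique distribution is a genuine triangle-free family with the same $(n,m,\Gamma)$, which makes the two-sided-error indistinguishability argument airtight. Your ``condition on the all-trivial transcript'' device for handling adaptivity across the three query types is also the right way to formalize what the paper states in one line. The only soft spots are boundary bookkeeping, for instance ensuring the residual clique still certifies arboricity $\Gamma$ when $m$ is within a constant factor of $\Gamma^2$ so that $k\in\{0,1\}$, and checking that conditioning on earlier misses does not inflate the per-query hit probability; both are routine and you flag the first explicitly, so the argument stands.
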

\begin{proof}
We consider the following graph $G$ over $n$ vertices, $m$ edges and of arboricity $\Gamma$. $V_1$ and $V_2$ are subsets of $2m/(3\Gamma)$ vertices each and $V_3$ is a subset of $\Gamma/2$ vertices. $V_1$, $V_2$ and $V_3$ are pairwise disjoint.
The edges of the graph are as follows. The sub-graph induced on $V_1$ and $V_3$ is a complete bipartite graph with $V_1$ on one side and $V_3$ on the other side.
Similarly the subgraph induced on $V_2$ and $V_3$ is also a complete bipartite graph.
Between $V_1$ and $V_2$ we take $\Gamma/2$ edge disjoint prefect matchings. Consequently the degree of every node in $V_1$ and $V_2$, as well as the arboricity of the graph, is exactly $\Gamma$. 
The number of edge disjoint triangles in the graph is at least $m/3$. 
To see this consider the following correspondence between an edge $\{u, v\}$ such that $u\in V_1$ and $v\in V_2$ and a triangle in the graph.
Let $i \in [\Gamma/2]$ denote the matching for which $\{u, v\}$ belongs to, then triangle that corresponds to $\{u ,v\}$ is $\{u, v, w_i\}$. 

Therefore the graph is $(1/3)$-far from being triangle free (if $t_1, \ldots, t_{m/3}$ are edge disjoint triangles of $G$ then we need to delete at least one edge per triangle in order to make $G$ triangle free).
The number of queries we need to make to hit either $V_1$ or $V_2$ is $\Omega((\Gamma n)/m)$. The number of queries we need to make to hit $V_3$ is $\Omega(n/\Gamma)$. Since $m = \Omega(\Gamma^2)$, we obtain a lower bound of $\Omega((\Gamma n)/m)$ queries in order to hit a vertex from $V_1\cup V_2 \cup V_3$.
Therefore unless the tester makes $\Omega((\Gamma n)/m)$ queries, it can not distinguish between $G$ and the empty graph. The theorem follows. 
\end{proof}

\subsection{Lower bound of $\Omega(\Gamma)$ for any $\Gamma$ and any feasible $m \geq \Gamma^3$}
We adapt the following lower bound of Alon et al.~\cite{AKKR08}. 
\begin{theorem}\label{thm:alon}
Any algorithm for testing triangle-freeness must perform $\Omega(\min\{\bar{d}, n/\bar{d}\})$ queries.
This lower bound holds even if the algorithm is allowed two-sided error and even for $d_{\rm max} = O(\bar{d})$.
\end{theorem}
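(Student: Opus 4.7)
The plan is to invoke Yao's minimax principle by exhibiting two distributions on $n$-vertex graphs with average degree $\Theta(\bar{d})$ and maximum degree $O(\bar{d})$: a YES-distribution $\mathcal{D}_Y$ supported on triangle-free graphs and a NO-distribution $\mathcal{D}_N$ supported on graphs that are $\Omega(1)$-far from being triangle-free. It then suffices to argue that the query-answer transcript of any deterministic algorithm that makes $q = o(\min\{\bar{d}, n/\bar{d}\})$ queries has total variation distance $o(1)$ between $\mathcal{D}_Y$ and $\mathcal{D}_N$, so no such algorithm can achieve constant success probability on both.

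In the regime $\bar{d} \leq \sqrt{n}$, the goal is an $\Omega(\bar{d})$ lower bound. I would partition $V$ into $n/\bar{d}$ blocks of $\bar{d}$ vertices via a uniformly random relabeling. Under $\mathcal{D}_Y$ each block carries a triangle-free structure of internal degree $\Theta(\bar{d})$ -- say, a complete balanced bipartite graph on its vertices; under $\mathcal{D}_N$ each block carries a triangle-rich structure -- say, a complete balanced tripartite graph on its vertices, contributing $\Theta(\bar{d}^{2})$ edge-disjoint triangles per block and hence $\Omega(m)$ edge-disjoint triangles globally (so the graph is $\Omega(1)$-far). In both cases every vertex has degree $\Theta(\bar{d})$, so degree queries return identical answers, and a random neighbor of a queried vertex lies in the same (random) block in either case with a symmetric conditional distribution. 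Distinguishing the two instances therefore requires actually witnessing a triangle inside some block, which demands $\Omega(\bar{d})$ queries concentrated inside a single block.

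In the regime $\bar{d} > \sqrt{n}$, the goal is an $\Omega(n/\bar{d})$ lower bound. Here I would plant the ``interesting'' structure inside a uniformly random subset $S \subseteq V$ of size $\Theta(\bar{d})$: $\mathcal{D}_Y$ places on $S$ a complete bipartite (triangle-free) graph and $\mathcal{D}_N$ places a complete tripartite (triangle-dense) graph, while the rest of the graph is identical under both distributions and contributes nothing to distinguishing. Inside $S$ all degrees are $\Theta(\bar{d})$, giving $d_{\max} = O(\bar{d})$ overall. A uniformly random vertex query hits $S$ with probability $\Theta(\bar{d}/n)$, so locating even two vertices of $S$ requires $\Omega(n/\bar{d})$ samples; until such samples occur, all degree and neighbor queries are distributed identically under $\mathcal{D}_Y$ and $\mathcal{D}_N$.

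The main obstacle I expect is handling vertex-pair queries in the general graph model, since an adaptive algorithm could attempt to guess pairs inside the hidden block or inside $S$ without first sampling them uniformly. The fix is a symmetry argument: because the block labeling (respectively the subset $S$) is drawn uniformly at random, and both distributions impose identical degree sequences, any vertex-pair query between two vertices the algorithm has not already ``localized'' has the same marginal answer distribution under $\mathcal{D}_Y$ and $\mathcal{D}_N$ conditioned on the transcript so far. A step-by-step coupling of the two transcripts then shows that they agree until the algorithm probes three mutually relevant vertices -- an event of probability $o(1)$ when $q = o(\min\{\bar{d}, n/\bar{d}\})$.
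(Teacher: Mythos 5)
Note first that the paper does not prove this theorem: it is imported verbatim as a black-box result of Alon, Kaufman, Krivelevich, and Ron~\cite{AKKR08} (the paper's text immediately before the theorem reads ``We adapt the following lower bound of Alon et al.''), and the paper then uses it to derive its own $\Omega(\Gamma)$ and $\Omega(m^{1/3})$ lower bounds. So there is no in-paper argument to compare your proposal against; you are attempting to reconstruct a result from a different paper.

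That said, the construction you propose does not work, and the failure is structural rather than a matter of missing bookkeeping. Consider the regime $\bar{d}\le\sqrt n$ and your YES/NO pair: complete balanced bipartite per block versus complete balanced tripartite per block. First, the degree sequences already disagree: every vertex has degree $\bar d/2$ in the bipartite case but $2\bar d/3$ in the tripartite case, so a single degree query separates the two distributions with certainty, and no random relabeling or coupling can hide this. Second, even if you equalized the degrees, the pair is distinguishable with $O(1)$ queries: sample a vertex $v$, take two independent random neighbors $u,w$ of $v$, and perform the vertex-pair query $\{u,w\}$. In the bipartite block both $u,w$ lie in the part opposite $v$ and are therefore never adjacent; in the tripartite block $u$ and $w$ land in distinct parts with constant probability and are then always adjacent. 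A constant number of repetitions of this three-query test succeeds with probability $2/3$, so your construction witnesses a lower bound of $\Omega(1)$, not $\Omega(\bar d)$. Your second-regime construction (planting a bipartite/tripartite gadget on a set $S$ of size $\Theta(\bar d)$) has the additional problem that when $\bar d\ll n$ the gadget contributes only $\Theta(\bar d^{2})=o(m)$ edges, so the NO instance is $o(1)$-far, not $\Omega(1)$-far, from being triangle-free; and once a single vertex of $S$ is located, the same $O(1)$-query local test again distinguishes. The actual lower-bound instances for this theorem must be designed so that triangles cannot be certified by looking at a constant-size neighborhood of any one vertex; that requires a much more diffuse, typically randomized, placement of triangle edges (and an exactly matched degree sequence), which is precisely the technical content of the Alon et al.\ argument and is absent from your sketch.
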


Using Theorem~\ref{thm:alon}, we shall prove the following claims.

\begin{claim}\label{claim:alpha}
For any $\Gamma$ and any feasible $m$ w.r.t. $\Gamma$ such that $m\geq \Gamma^3$, any algorithm for testing triangle-freeness must perform $\Omega(\Gamma)$ queries, where $m$ and $\Gamma$ denote the number of edges and the arboricity of the input graph, respectively.
This lower bound holds even if the algorithm is allowed two-sided error.
\end{claim}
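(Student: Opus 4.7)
The plan is to reduce to Theorem~\ref{thm:alon} by instantiating its hard family at suitable parameters and, if necessary, padding with isolated vertices so that the vertex count matches $n$. Concretely, I would set $n_0 \eqdef \min\{2m/\Gamma,\, n\}$ and $\bar{d}_0 \eqdef 2m/n_0$, so that $\bar{d}_0 \in [\Gamma, 2\Gamma]$. Invoking the hard family of Theorem~\ref{thm:alon} on $n_0$ vertices with average degree $\bar{d}_0$---which the theorem allows us to choose with $d_{\max} = O(\bar{d}_0) = O(\Gamma)$---produces a graph with $m$ edges and arboricity $\Theta(\Gamma)$ (the arboricity is sandwiched between $\bar{d}_0/2$ and $d_{\max}$), for which distinguishing triangle-free from $\Omega(1)$-far requires $\Omega(\min\{\bar{d}_0,\, n_0/\bar{d}_0\})$ queries, even with two-sided error.

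Next I would verify that this minimum is $\Omega(\Gamma)$. Since $\bar{d}_0 \geq \Gamma$ by construction, it suffices to show $n_0/\bar{d}_0 = \Omega(\Gamma)$. If $n_0 = 2m/\Gamma$, then $n_0/\bar{d}_0 = 2m/\Gamma^2 \geq 2\Gamma$ using $m \geq \Gamma^3$. If instead $n_0 = n$ (which can occur only when $m > n\Gamma/2$), then $n_0/\bar{d}_0 = n^2/(2m) \geq n/(2\Gamma) \geq \Gamma/2$, where the last inequality uses that $m \geq \Gamma^3$ together with $m \leq n\Gamma$ forces $n \geq \Gamma^2$.

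Finally, when $n_0 < n$ I would pad the hard instance with $n - n_0$ isolated vertices. The padded graphs have the same $m$ edges, the same arboricity $\Theta(\Gamma)$, and the same distance to triangle-freeness (since $m$ is unchanged). The $\Omega(\Gamma)$ lower bound transfers by a straightforward simulation: any tester for the padded graph yields a tester for the unpadded graph of no greater query complexity by answering every query that touches a padding vertex with the canonical ``isolated'' reply, so distinguishing on the padded side is at least as hard as on the unpadded side.

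The main technical point is to check that Theorem~\ref{thm:alon}'s construction is actually available at the particular $(n_0, \bar{d}_0)$ we need, together with the $d_{\max} = O(\bar{d}_0)$ restriction; this is precisely the regime the theorem emphasizes, so the reduction is essentially automatic. A constant-factor adjustment at the end (or a minor reparameterization of the construction) may be used to make the arboricity exactly $\Gamma$ rather than merely $\Theta(\Gamma)$, which does not affect the $\Omega(\Gamma)$ conclusion.
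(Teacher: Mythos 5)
Your proposal is correct and follows essentially the same approach as the paper: both reduce to Theorem~\ref{thm:alon} by instantiating its hard family on roughly $m/\Gamma$ vertices with $m$ edges and maximum degree $O(\Gamma)$ (so arboricity $\Theta(\Gamma)$), padding with isolated vertices to reach $n$ vertices, and observing that $\min\{\bar d, n_0/\bar d\} = \Omega(\Gamma)$ once $m \geq \Gamma^3$ and $m \leq n\Gamma$. The paper phrases this as a contradiction (an $o(\Gamma)$ tester for the padded family would yield an $o(\Gamma)$ tester violating Theorem~\ref{thm:alon}), while you invoke the hard instance directly, but these are the same argument.
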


\begin{proof}
Assume towards contradiction that there exists an algorithm $\mathcal{A}$ for testing triangle-freeness that is allowed two-sided error and performs $o(\Gamma)$ queries even for input graphs for which $m\geq \Gamma^3$, 
where $m$ and $\Gamma$ denote the number of edges and the arboricity of the input graph of $\mathcal{A}$, respectively.
We will show that there exists an algorithm $\mathcal{B}$ for testing triangle-freeness (with two-sided error) for graphs in which $M/N = \Gamma$, $N = m/\Gamma$ and the maximum degree is $\Gamma$, whose query complexity is $o(\Gamma)$, where $M$ and $N$ denote the number of edges and the number of vertices of the input graph of $\mathcal{B}$, respectively.
This will contradict the lower bound in Theorem~\ref{thm:alon} as $\min \{M/N, N^2/M\} = \min \{\Gamma, m/\Gamma^2\} = \Gamma$, where the last inequality follows from the fact that $m \geq \Gamma^3$.

Let $m$, $n$ and $\Gamma$ be such that $m$ is feasible w.r.t. $\Gamma$ and $m \geq \Gamma^3$.
Therefore $\Gamma^3 \leq m \leq \Gamma n$. 
Let $G$ be a graph over $N$ vertices and $M$ edges for which $M/N = \Gamma$, $N = m/\Gamma$ and the maximum degree is $\Gamma$. 
Given such an input graph $G$, the algorithm $\mathcal{B}$ simulates $\mathcal{A}$ on another graph, $G'$, that will be described momentarily, and returns the output of $\mathcal{A}$ on $G'$.
The graph $G'$ is constructed from $G$ and has the following properties:
\begin{enumerate}
\item $G'$ has arboricity $\Gamma$. \label{item:1}
\item Any query on $G'$ can be answered by performing at most a single query to $G$\label{item:2}
\item If $G$ is triangle free then $G'$ is triangle free as well. \label{item:3}
\item If $G$ is $\eps$-far from being triangle free then $G'$ is $\eps$-far from being triangle free as well. \label{item:4}
\end{enumerate} 
$G'$ is simply the graph $G$ with $n-N$ isolated vertices (observe that $n \geq N$ since $m \leq \Gamma n$).
Since $M = m$, Item~\ref{item:4} follows. Items~\ref{item:1} and~\ref{item:3} follow from construction and the bound on the maximum degree of $G$.
Item~\ref{item:2} follows from the fact that any query to the graph $G'$ is can be answered  either by performing a single query the graph $G'$ or to without performing any query to $G$ (in case the algorithm queries the subgraph induced on the additional $n- N$ isolated vertices of $G'$).   

The completeness and soundness of algorithm $\mathcal{B}$ follows from the correctness of algorithm $\mathcal{A}$ and Items~\ref{item:3} and~\ref{item:4}, respectively.
The claim about the query complexity of algorithm $\mathcal{B}$ follows from Item~\ref{item:2} and the assumption on the query complexity of algorithm $\mathcal{A}$.
This completes the proof of the claim. 
\end{proof}

\subsection{Lower bound of $\Omega(m^{1/3})$ for any $\Gamma \leq (n/2)^{1/2}$ and any feasible $n \leq m \leq \Gamma^3$}

\begin{claim}
For any $\Gamma \leq (n/2)^{1/2}$ and any feasible $m$ w.r.t. $\Gamma$ such that $n \leq m \leq \Gamma^3$, any algorithm for testing triangle-freeness must perform $\Omega(m^{1/3})$ queries, where $m$, $n$ and $\Gamma$ denote the number of edges, number of vertices and the arboricity of the input graph, respectively.
This lower bound holds even if the algorithm is allowed two-sided error.
\end{claim}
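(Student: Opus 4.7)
The plan is to mimic the reduction strategy of Claim~\ref{claim:alpha} and deduce the bound from Theorem~\ref{thm:alon}. Assume for contradiction that $\mathcal{A}$ is a two-sided-error tester for triangle-freeness making $o(m^{1/3})$ queries on all graphs with the stated parameters $(n, m, \Gamma)$. I will design a simulator $\mathcal{B}$ that uses $\mathcal{A}$ to test triangle-freeness on a smaller Alon--Kaufman--Krivelevich--Ron hard instance $G$ whose parameters push Theorem~\ref{thm:alon} to value $\Omega(m^{1/3})$, thereby contradicting that theorem.

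For the auxiliary graph, I would take $G$ to be a hard instance of Theorem~\ref{thm:alon} with $N = \Theta(m^{2/3})$ vertices, $M = \Theta(m)$ edges (so $\bar{d}_G = \Theta(m^{1/3})$) and $d_{\max}(G) = O(\bar{d}_G)$. Then $\min\{\bar{d}_G, N/\bar{d}_G\} = \Theta(m^{1/3})$, so testing triangle-freeness on $G$ requires $\Omega(m^{1/3})$ queries. Because $d_{\max}(G) = O(m^{1/3})$, $G$ has arboricity $O(m^{1/3}) \leq \Gamma$ (using $m \leq \Gamma^3$). Now construct $G'$ as the disjoint union of $G$, a triangle-free arboricity-boosting gadget $B$ of arboricity exactly $\Gamma$ (for instance $B = K_{2\Gamma-1,\, 2\Gamma-1}$, whose densest subgraph is itself and satisfies $(2\Gamma-1)^2/(4\Gamma-3) < \Gamma$, so by Nash-Williams its arboricity is exactly $\Gamma$), and $n - |V(G)| - |V(B)|$ isolated vertices. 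The inequality $\Gamma \leq \sqrt{n/2}$, and hence $n \geq 2\Gamma^2$, guarantees that enough vertices remain for padding, while the Alon et al. instance is tuned so that $|E(G)| + |E(B)| = m$.

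Verification then proceeds along the four items of Claim~\ref{claim:alpha}: (i) $G'$ has arboricity $\Gamma$ because $\Gamma(G) \leq \Gamma$, $\Gamma(B) = \Gamma$, and arboricity is preserved under disjoint union; (ii) any query to $G'$ by $\mathcal{A}$ is answered by at most one query to $G$, since queries into $B$ or the isolated padding are handled from the known fixed structure; (iii) $G$ is triangle-free iff $G'$ is, since $B$ is bipartite and the padding is edgeless; (iv) if $G$ is $\eps$-far from triangle-free then $G'$ is $\Omega(\eps)$-far, because $|E(G)| = \Theta(|E(G')|) = \Theta(m)$. These together yield an $o(m^{1/3})$-query tester for $G$, contradicting Theorem~\ref{thm:alon}.

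The main technical obstacle lies in the boundary regime where $m$ is close to its feasibility lower bound $2\Gamma^2$. A triangle-free graph of arboricity $\Gamma$ requires $\Omega(\Gamma^2)$ edges (by Mantel's theorem applied to a Nash-Williams witness subgraph), so the gadget $B$ consumes a non-negligible constant fraction of $m$ when $m = \Theta(\Gamma^2)$. One must therefore choose the sizes of $G$ and $B$ carefully, keeping $|E(G)| = \Theta(m)$ so that Theorem~\ref{thm:alon} still yields $\Omega(m^{1/3})$; this can be done by taking the biclique $B$ with sides of size $\Theta(\Gamma)$ tuned to have arboricity exactly $\Gamma$ while occupying at most a fixed fraction of the edge budget, and absorbing the slack into the constants of the Alon et al. instance. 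All remaining steps are routine consequences of the reduction template already established in Claim~\ref{claim:alpha}.
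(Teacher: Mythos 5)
Your reduction is exactly the one the paper uses: pad an Alon--Kaufman--Krivelevich--Ron hard instance $G$ with $N = \Theta(m^{2/3})$, $M = \Theta(m)$ and $d_{\max}(G) = O(m^{1/3}) \leq \Gamma$ by a triangle-free biclique gadget plus isolated vertices, and verify the same four properties. The one place where you diverge from the paper is the choice of gadget, and that is exactly where a gap appears.

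You take $B = K_{2\Gamma-1,\,2\Gamma-1}$ so that its Nash--Williams density is strictly between $\Gamma-1$ and $\Gamma$, giving arboricity \emph{exactly} $\Gamma$. The arithmetic there is fine, but the gadget has $(2\Gamma-1)^2 \approx 4\Gamma^2$ edges, while the feasibility constraints only guarantee $m \geq n \geq 2\Gamma^2$. In the entire regime $2\Gamma^2 \leq m < 4\Gamma^2$ the gadget alone already has more edges than $m$, so there is no room for $G$ at all, and even for $m$ slightly above $4\Gamma^2$ the leftover $|E(G)| = m - |E(B)|$ is $o(m)$, which kills the $\Omega(m^{1/3})$ bound from Theorem~\ref{thm:alon}. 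Your proposed fix --- tune the side lengths of a biclique with sides $\Theta(\Gamma)$ to get arboricity $\Gamma$ while using only a fixed fraction of $m$ --- cannot work: for any balanced biclique $K_{a,a}$, arboricity exactly $\Gamma$ forces $a^2/(2a-1) \in (\Gamma-1,\Gamma]$, hence $a$ in a narrow window around $2\Gamma$ and $|E(B)| \approx 4\Gamma^2$; more generally, by Mantel plus Nash--Williams any triangle-free graph of arboricity $\Gamma$ needs at least roughly $4\Gamma^2$ edges. So ``exact arboricity $\Gamma$ with $o(\Gamma^2)$ edges'' is impossible for a triangle-free gadget, and the slack cannot be absorbed.

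The paper avoids this by using $A = K_{\Gamma,\Gamma}$ with only $\Gamma^2 \leq m/2$ edges, which always fits. The price is that $\Gamma(K_{\Gamma,\Gamma}) = \lceil \Gamma^2/(2\Gamma-1)\rceil \approx \Gamma/2 + 1$, not $\Gamma$ --- the paper's remark that ``the arboricity of $A$ is $\Gamma$'' is literally off by a factor of roughly two (it is only correct for $\Gamma \in \{1,2\}$). Asymptotically this does not hurt: the constructed family has arboricity $\Theta(\Gamma)$ and the $\Omega(m^{1/3})$ lower bound is unaffected by a constant-factor reparametrization of $\Gamma$. So the correct resolution is the opposite of yours: keep the small $\Gamma^2$-edge gadget and relax ``arboricity exactly $\Gamma$'' to ``arboricity $\Theta(\Gamma)$,'' rather than insist on exact arboricity and blow the edge budget.
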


\begin{proof}
The proof of this claim follows the same lines as the proof of Claim~\ref{claim:alpha}.
Assume towards contradiction that there exists an algorithm $\mathcal{A}$ for testing triangle-freeness that is allowed two-sided error and performs $o(m^{1/3})$ queries even for input graphs for which $n \leq m \leq \Gamma^3$, 
where $m$, $n$ and $\Gamma$ denote the number of edges, number of vertices and the arboricity of the input graph, respectively.
We will show that there exists an algorithm $\mathcal{B}$ for testing triangle-freeness (with two-sided error) for graphs in which $M = N^{3/2} = \Theta(m)$ and the maximum degree is $M/N$, whose query complexity is $o(M/N)$, where $M$ and $N$ denote the number of edges and the number of vertices of the input graph of $\mathcal{B}$, respectively.
This will contradict the lower bound in Theorem~\ref{thm:alon} as $\min \{M/N, N^2/M\} = M/N$, where the last inequality follows from the fact that $M = N^{3/2}$.

Let $m$, $n$ and $\Gamma$ be such that $m$ is feasible w.r.t. $\Gamma$ and $n \leq m \leq \Gamma^3$.
Let $G$ be a graph over $N$ vertices and $M$ edges for which $M = N^{3/2} = m/2$ and for which the maximum degree is $M/N$.
Given such an input graph $G$, the algorithm $\mathcal{B}$ simulates $\mathcal{A}$ on another graph, $G'$, that will be described momentarily, and returns the output of $\mathcal{A}$ on $G'$.
The graph $G'$ is constructed from $G$ and has the following properties:
\begin{enumerate}
\item $G'$ has arboricity $\Gamma$. \label{item:2.1}
\item Any query on $G'$ can be answered by performing at most a single query to $G$\label{item:2.2}
\item If $G$ is triangle free then $G'$ is triangle free as well. \label{item:2.3}
\item If $G$ is $\eps$-far from being triangle free then $G'$ is at least $\eps/2$-far from being triangle free as well. \label{item:2.4}
\end{enumerate} 
$G'$ is composed of the graph $G$, a complete bipartite graph $A$ over $2\Gamma$ vertices and a graph $I$ of $n - N - 2\Gamma$ isolated vertices.
Observe that $n - N \geq 2\Gamma$ since $\Gamma^2 \leq n/2$ and $N = M^{2/3} \leq \Gamma^2/2^{2/3} \leq n/2$.  
The graph $A$ has $\Gamma$ vertices on each side, $A_1$ and $A_2$ and $\Gamma^2$ edges.  
Item~\ref{item:4} follows from the fact that $M = m - \Gamma^2 \geq m - n/2 \geq m/2$. 
Observe that maximum degree of $G$ is at most $\Gamma$ since $M/N \leq m^{1/3} \leq \Gamma$.
Therefore, Items~\ref{item:1} and~\ref{item:3} follow from the fact that the arboricity of $A$ is $\Gamma$ and the bound on the maximum degree of $G$.
Item~\ref{item:2} follows from the fact that any query to the graph $G'$ is can be answered  either by performing a single query the graph $G'$ or to without performing any query to $G$ (in case the algorithm queries the subgraphs $A$ or $I$).   

The completeness and soundness of algorithm $\mathcal{B}$ follows from the correctness of algorithm $\mathcal{A}$ and Items~\ref{item:3} and~\ref{item:4}, respectively.
The claim about the query complexity of algorithm $\mathcal{B}$ follows from Item~\ref{item:2} and the assumption on the query complexity of algorithm $\mathcal{A}$.
This completes the proof of the claim. 
\end{proof}

\bibliography{refs}

\appendix
\section{Appendix}
\begin{theorem}[Multiplicative Chernoff's Bound]\label{thm:chermul}
Let $X_1, \ldots, X_n$ be identical independent random variables ranging in $[0, 1]$, and let $p = \E[X_1]$. Then, for every $\gamma \in (0, 2]$, it holds that
\begin{equation}
\Pr\left[ \left| \frac{1}{n} \cdot \sum_{i\in [n]} X_i - p \right| > \gamma \cdot p\right] < 2 \cdot e^{-\gamma^2 pn /4} \;.
\end{equation}  
\end{theorem}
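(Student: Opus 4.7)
The plan is to use the standard Chernoff/moment-generating-function method, separately bounding the upper and lower deviations by $e^{-\gamma^2 pn/4}$ and then applying a union bound to get the factor of $2$. Set $S = \sum_{i\in[n]} X_i$ and $\mu = pn$, so the goal becomes $\Pr[S > (1+\gamma)\mu] < e^{-\gamma^2 \mu/4}$ and $\Pr[S < (1-\gamma)\mu] < e^{-\gamma^2 \mu/4}$ (the latter only matters for $\gamma \in (0,1]$ since otherwise the event is empty, so that tail is automatic for $\gamma \in (1,2]$).

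For the upper tail, I would first establish the MGF bound $\E[e^{tX_i}] \leq e^{p(e^t-1)}$ for all $t \in \mathbb{R}$. This follows from convexity of $e^{tx}$ on $[0,1]$, which gives $e^{tX_i} \leq 1 - X_i + X_i e^t$ pointwise, so $\E[e^{tX_i}] \leq 1 + p(e^t - 1) \leq \exp(p(e^t - 1))$ using $1+u \leq e^u$. By independence, $\E[e^{tS}] \leq e^{\mu(e^t-1)}$. Markov's inequality applied to $e^{tS}$ for $t>0$ then yields $\Pr[S \geq (1+\gamma)\mu] \leq \exp\bigl(\mu(e^t - 1) - t(1+\gamma)\mu\bigr)$. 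Optimizing over $t$ by choosing $t = \ln(1+\gamma)$ gives the classical bound
\[
\Pr[S \geq (1+\gamma)\mu] \leq \exp\bigl(-\mu\bigl((1+\gamma)\ln(1+\gamma) - \gamma\bigr)\bigr).
\]
The lower tail is handled symmetrically with $t < 0$ (or equivalently by applying the upper-tail argument to $1 - X_i$), giving $\Pr[S \leq (1-\gamma)\mu] \leq \exp\bigl(-\mu\bigl((1-\gamma)\ln(1-\gamma) + \gamma\bigr)\bigr)$ for $\gamma \in (0,1)$.

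What remains is the elementary calculus lemma that drives the stated constant: for $\gamma \in (0,2]$, $(1+\gamma)\ln(1+\gamma) - \gamma \geq \gamma^2/4$, and for $\gamma \in (0,1)$, $(1-\gamma)\ln(1-\gamma) + \gamma \geq \gamma^2/4$ (in fact $\geq \gamma^2/2$). I would verify these by defining $f(\gamma) = (1+\gamma)\ln(1+\gamma) - \gamma - \gamma^2/4$, noting $f(0)=0$, $f'(\gamma) = \ln(1+\gamma) - \gamma/2$, and checking that $f'(0)=0$ while $f'(\gamma) \geq 0$ on $(0,2]$: since $f''(\gamma) = 1/(1+\gamma) - 1/2$ is positive on $(0,1)$ and negative on $(1,2)$, $f'$ attains its minimum on $\{0, 2\}$; the value $f'(2) = \ln 3 - 1 > 0$ settles the sign. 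The lower-tail inequality is handled analogously (and is in fact the easier of the two, which is why many formulations use the looser constant $1/4$ uniformly). Combining the two tail bounds with a union bound yields the claimed inequality.

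The main obstacle is purely the calculus step verifying the bracketed inequality on the full range $\gamma \in (0,2]$: the standard textbook statement usually restricts to $\gamma \in (0,1]$ with constant $1/3$ (upper tail) or $1/2$ (lower tail), so some care is needed to confirm that $1/4$ is the correct constant that works uniformly on the broader range required here. Everything else is routine application of the Chernoff recipe.
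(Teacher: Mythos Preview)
Your proof is correct and follows the standard moment-generating-function approach; the calculus verification that $(1+\gamma)\ln(1+\gamma)-\gamma \ge \gamma^2/4$ on $(0,2]$ via the endpoint analysis of $f'$ is clean and accurate. Note, however, that the paper does not supply its own proof of this theorem: it is simply stated in the appendix as a known concentration bound and invoked as a black box, so there is no paper-side argument to compare your approach against.
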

\end{document}